\newtheorem{proposition}{Proposition}
\newtheorem{theorem}{Theorem} 
\newtheorem{definition}{Definition}
\newtheorem{corollary}{Corollary}
\def\ps@pprintTitle{%
 \let\@oddhead\@empty 
 \let\@evenhead\@empty
 \def\@oddfoot{}%
 \let\@evenfoot\@oddfoot}
\titlespacing*{\section}{0pt}{2.5ex}{1.2ex}
\titlespacing*{\subsection}{0pt}{1.8ex}{0.8ex}
\titlespacing*{\subsubsection}{0pt}{1.2ex}{0.5ex}
\begin{document}
\hypersetup{ % overwrite hyperlink settings
  colorlinks=true,
  linkcolor=MidnightBlue,
  citecolor=MidnightBlue,
  urlcolor=MidnightBlue
}
\begin{frontmatter}

% Title, authors, affiliations.
%%%%%%%%%%%%%%%%%%%%%%%%%%%%%%%%%%%%%%%%%%%%%%%%%%%%%%%%%%%%%%%%%%%%%%%%%%%%%%%%%%%%%%%%%%%%%%%%%%%%%%%%%%%%%%
\title{Strategic Network Abandonment}

\author[1,2]{Sandro Claudio Lera\corref{cor1}}
\author[3]{Andreas Haupt}

\cortext[cor1]{corresponding author: \texttt{leras@sustech.edu.cn}}
\address[1]{Institute of Risk Analysis, Prediction and Management, Southern University of Science and Technology, Shenzhen, China}
\address[2]{Connection Science, Massachusetts Institute of Technology, Cambridge, USA}
\address[3]{Stanford Institute for Human-Centered Artificial Intelligence, Stanford, USA}

% Abstract and keywords. 
%%%%%%%%%%%%%%%%%%%%%%%%%%%%%%%%%%%%%%%%%%%%%%%%%%%%%%%%%%%%%%%%%%%%%%%%%%%%%%%%%%%%%%%%%%%%%%%%%%%%%%%%%%%%%%
\begin{abstract}
Socio-economic networks, from cities and firms to collaborative projects, often appear resilient for long periods before experiencing rapid, cascading decline as participation erodes.
We explain such dynamics through a framework of strategic network abandonment, in which interconnected agents choose activity levels in a network game and remain active only if participation yields higher utility than an improving outside option.
As outside opportunities rise, agents exit endogenously, triggering equilibrium readjustments that may either dissipate locally or propagate through the network.
The resulting decay dynamics are governed by the strength of strategic complementarities, measuring how strongly an agent's incentives depend on the actions of others.
When complementarities are weak, decay follows a heterogeneous threshold process analogous to bootstrap percolation: failures are driven by local neighborhoods, vulnerable clusters can be identified ex ante, and large cascades emerge only through bottom-up accumulation of fragility.
When complementarities are strong, departures propagate globally, producing rupture-like dynamics characterized by metastable plateaus, abrupt system-wide collapse, and limited predictive power of standard spectral or structural indicators.
The comparative effectiveness of intervention depends on the strength of complementarity as well: supporting central agents is most effective under strong complementarities, whereas targeting marginal agents is essential when complementarities are weak.
Together, our results reveal how outside options, network structure, and strategic interdependence jointly determine both the fragility of socio-economic networks and the policies required to sustain them.
\end{abstract}

\begin{keyword}
network decay \sep strategic complementarities \sep cascade dynamics \sep bootstrap percolation

\end{keyword}

\end{frontmatter}

\begin{quote}
    \enquote{How did you go bankrupt?} 
    \enquote{Two ways. Gradually, then suddenly.}
    --- Ernest Hemingway
\end{quote}

%%%%%%%%%%%%%%%%%%%%%%%%%%%%%%%%%%%%%%%%%%%%%%%%%%%%%%%%%%%%%%%%%%%%%%%%%%%%%%%%%%%%%%%%%%%%%%%%%%%%%%%%%%%%%%
%%%%%%%%%%%%%%%%%%%%%%%%%%%%%%%%%%%%%%%%%%%%%%%%%%%%%%%%%%%%%%%%%%%%%%%%%%%%%%%%%%%%%%%%%%%%%%%%%%%%%%%%%%%%%%
\section*{Introduction}
%%%%%%%%%%%%%%%%%%%%%%%%%%%%%%%%%%%%%%%%%%%%%%%%%%%%%%%%%%%%%%%%%%%%%%%%%%%%%%%%%%%%%%%%%%%%%%%%%%%%%%%%%%%%%%
%%%%%%%%%%%%%%%%%%%%%%%%%%%%%%%%%%%%%%%%%%%%%%%%%%%%%%%%%%%%%%%%%%%%%%%%%%%%%%%%%%%%%%%%%%%%%%%%%%%%%%%%%%%%%%

The abandonment of socio-economic systems often begins slowly and then accelerates rapidly.
Online communities, collaborative projects, firms, and other participation-based networks can experience long periods of declining engagement, during which individual departures appear inconsequential, followed by sudden cascades that rapidly dismantle collective activity.
Understanding when gradual erosion escalates into abrupt collapse is a central challenge for predicting the stability of interconnected social and economic systems \cite{Glaeser2005}.

Decay processes in complex systems are often analyzed through models of failure propagation, in which the loss of individual nodes or links can trigger cascades whose extent depends on network connectivity and agent-level heterogeneity, such as differences in degree, thresholds, or vulnerability \cite{Granovetter1978,Watts2002,Morris2000}.
This perspective has been especially influential in the study of infrastructure systems, such as power grids or communication networks, where random breakdowns or targeted attacks redistribute load and can precipitate large-scale failures \cite{Albert2000,Cohen2000,Cohen2001}.
In many socio-economic and social networks, however, decline follows a different logic.
Participation erodes not because components fail mechanically or because failure spreads infectiously, but because agents make strategic decisions to disengage when continued participation becomes less attractive than available outside opportunities.
As a result, departures are driven by incentives rather than exogenous shocks, and the resulting dynamics reflect endogenous feedback between individual choices and network structure.

Work in economics and sociology has long emphasized that participation in networks is often a matter of strategic choice \cite{Jackson2002,Bramoulle2009,CalvoArmengol2009,Aral2012}.
Network games show that equilibrium activity levels depend strongly on agents’ positions in the network, with outcomes closely linked to measures such as Bonacich centrality \cite{Ballester2006,Bramoulle2007,Bramoulle2014,Galeotti2010}.
However, these approaches typically focus on static equilibria rather than the dynamic unraveling of networks through iterative exit.
At the same time, empirical studies of organizational decline and community disengagement document how the departure of some participants can reduce the incentives of those who remain, potentially triggering rapid collective withdrawal \cite{Torok2017,Avelino2019}.

To account for such dynamics, we study the strategic abandonment of socio-economic networks governed by economically motivated agents.
We model a repeated network game in which agents remain active only if their equilibrium payoff from participation exceeds that of an outside option that improves over time.
Agents choose activity levels through a linear--quadratic network game, yielding equilibrium actions proportional to Bonacich centrality \cite{Ballester2006}.
As outside opportunities rise, agents at the margin of participation exit first.
Their departure reduces the equilibrium payoffs of remaining agents, potentially triggering further exits.
Repeated iteratively, this process produces either slow, incremental decline or sudden cascades that rapidly dismantle the network.

Our analysis shows that the resulting decay dynamics are governed by the strength of strategic complementarities.
When complementarities are weak, departures propagate primarily through local neighborhood structure and the dynamics reduce to a generalized threshold process closely related to heterogeneous $k$-core percolation.
When complementarities are strong, departures propagate globally through equilibrium feedback effects, producing abrupt system-wide collapse and substantial sensitivity to small structural perturbations.
These regimes also imply different intervention strategies:
supporting vulnerable peripheral agents is most effective when complementarities are weak, whereas targeting central agents becomes optimal when complementarities are strong and spillovers diffuse broadly through the network.

%%%%%%%%%%%%%%%%%%%%%%%%%%%%%%%%%%%%%%%%%%%%%%%%%%%%%%%%%%%%%%%%%%%%%%%%%%%%%%%%%%%%%%%%%%%%%%%%%%%%%%%%%%%%%%
%%%%%%%%%%%%%%%%%%%%%%%%%%%%%%%%%%%%%%%%%%%%%%%%%%%%%%%%%%%%%%%%%%%%%%%%%%%%%%%%%%%%%%%%%%%%%%%%%%%%%%%%%%%%%%
\section*{Results}
%%%%%%%%%%%%%%%%%%%%%%%%%%%%%%%%%%%%%%%%%%%%%%%%%%%%%%%%%%%%%%%%%%%%%%%%%%%%%%%%%%%%%%%%%%%%%%%%%%%%%%%%%%%%%%
%%%%%%%%%%%%%%%%%%%%%%%%%%%%%%%%%%%%%%%%%%%%%%%%%%%%%%%%%%%%%%%%%%%%%%%%%%%%%%%%%%%%%%%%%%%%%%%%%%%%%%%%%%%%%%

%%%%%%%%%%%%%%%%%%%%%%%%%%%%%%%%%%%%%%%%%%%%%%%%%%%%%%%%%%%%%%%%%%%%%%%%%%%%%%%%%%%%%%%%%%%%%%%%%%%%%%%%%%%%%%
\subsection*{A Model for Strategic Network Abandonment}
%%%%%%%%%%%%%%%%%%%%%%%%%%%%%%%%%%%%%%%%%%%%%%%%%%%%%%%%%%%%%%%%%%%%%%%%%%%%%%%%%%%%%%%%%%%%%%%%%%%%%%%%%%%%%%

\subsubsection*{Single-Step Equilibrium}
%%%%%%%%%%%%%%%%%%%%%%%%%%%%%%%%%%%%%%%%%%%%%%%%%%%%%%%%%%%%%%%%%%%%%%%%%%%%%%%%%%%%%%%%%%%%%%%%%%%%%%%%%%%%%%

We consider a network consisting of $N$ agents connected by weighted, undirected links. 
The structure of the network is described by its adjacency matrix $A$, where $A_{ij}>0$ if there is a link between agent $i$ and $j$, and $A_{ij}=0$ otherwise.

Each agent $i$ chooses a scalar activity level $x_i \geqslant 0$, representing how much effort to invest in the system.
Agent $i$'s utility is given by a standard linear-quadratic form \cite{Jackson2002,Bramoulle2009,CalvoArmengol2009,Aral2012}:
\begin{equation}
	\label{eq:linquad_utility}
	U_i = \alpha x_i - \frac{1}{2} x_i^2 +  \beta x_i \sum_{j=1}^N A_{ij} x_j.
\end{equation}
This linear-quadratic structure is widely used because it arises as a first-order approximation to a broad class of network interaction games, 
making it a natural and highly general framework for modeling strategic behavior in socio-economic systems. 
The first term captures the direct utility from agent $i$'s own effort, while the second term represents the cost of exerting effort.  
The final term captures the benefit an agent receives from the effort of its neighbors, scaled by its own activity, scaled by the complementarity parameter $\beta \geqslant 0$.  
Larger values of $\beta$ imply stronger strategic complementarities, meaning an agent gains more from being connected to active neighbors.

Given a fixed network $A$, the vector of equilibrium activity levels $x = (x_1, \dots, x_N)$ is determined by simultaneous best responses.  
Solving the first-order conditions yields a unique Nash equilibrium provided that $\beta$ is sufficiently small so that the matrix $I - \beta A$ is invertible, which occurs whenever $\beta \rho(A) < 1$, where $\rho(A)$ is the spectral radius of $A$.  
The equilibrium actions are given by 
\begin{equation}
	\label{eq:equilibrium_action}
	x^*(A) = \alpha (I - \beta A)^{-1} \mathbf{1},	
\end{equation}
where $\mathbf{1}$ is a vector of ones.
The term $(I - \beta A)^{-1} \mathbf{1}$ is known as the \emph{Bonacich centrality} of the network.  
It generalizes degree centrality by accounting for both direct and indirect connections, with longer paths discounted by powers of $\beta$.  
Agents with high Bonacich centrality are strongly connected to others who are themselves well connected.

At equilibrium, the utility of agent $i$ is
\begin{equation}
	\label{eq:equilibrium_utility}
	U_i^\ast(x_i) = \frac{1}{2} \left( x_i^* \right)^2,
\end{equation}
so ranking agents by equilibrium utility is equivalent to ranking them by their activity level $x_i$.  
Thus, in what follows, we often use the terms \enquote{utility} and \enquote{centrality} interchangeably.

\subsubsection*{Iterative Removal Process}
%%%%%%%%%%%%%%%%%%%%%%%%%%%%%%%%%%%%%%%%%%%%%%%%%%%%%%%%%%%%%%%%%%%%%%%%%%%%%%%%%%%%%%%%%%%%%%%%%%%%%%%%%%%%%%

%%%%%%%%%%%%%%%%%%%%%%%%%%%%%%%%%%%%%%%%%%%%%%%%%%%%%%%%%%%%%%%%%%%%%%%%%%%%%%%%%%%%%%%%%%%%%%%%%%%%%%%%%%%%%%
\begin{figure*}[!htb]
    \centering
    \includegraphics[width=\linewidth]{./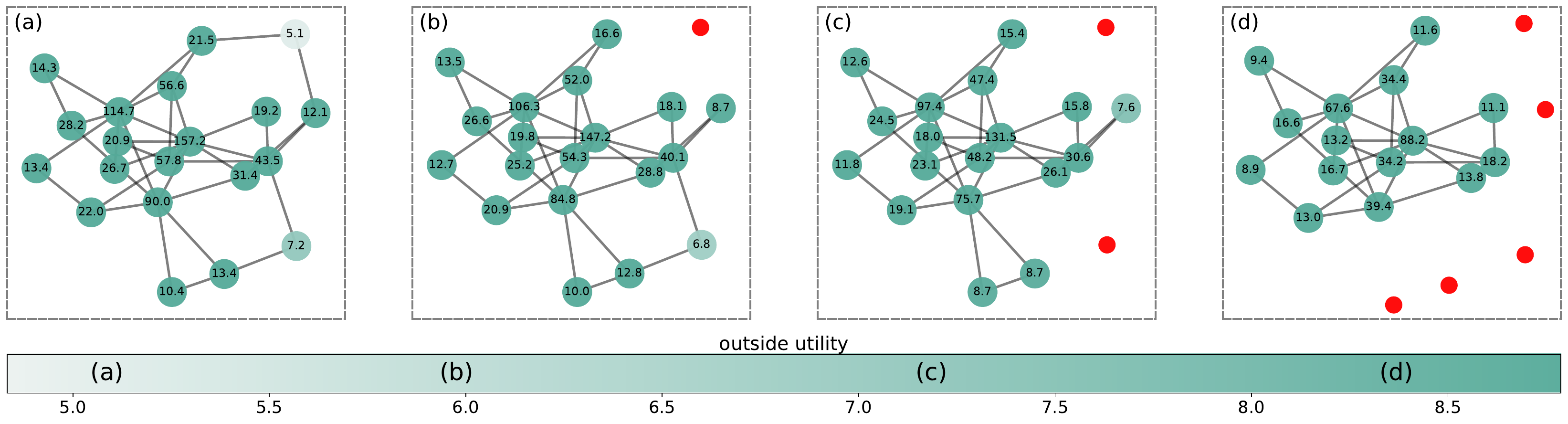}
    \caption{
    \textbf{The Strategic Network Abandonment Model.}
	(a) The initial state of a network of 20 agents (nodes), 
    with node sizes proportional to payoff utility \eqref{eq:equilibrium_utility} from game \eqref{eq:linquad_utility} with $\alpha=1$, $\beta=0.3$ and $\beta \rho(A) = 0.9$. 
	The outside utility is set equal to $5.05$ and hence below the minimal utility of $5.1$ that the agent with the lowest utility gets for being in the network. 
	(b) The outside option is increased, causing the lowest-utility agent to exit without triggering further departures.
     (c) A further increase in the outside option leads to the exit of one additional agent, again without a cascade.
    (d) A further increase of the outside option triggers a cascade, 
    where the abandonment of a third agent induces the abandonment of two more agents.
    We refer to Box~\ref{box:iterative-algorithm} for details. 
    }
    \label{fig:decay_concept}
\end{figure*}
%%%%%%%%%%%%%%%%%%%%%%%%%%%%%%%%%%%%%%%%%%%%%%%%%%%%%%%%%%%%%%%%%%%%%%%%%%%%%%%%%%%%%%%%%%%%%%%%%%%%%%%%%%%%%%

We now study how the network evolves when some agents abandon the system over time.  
We begin at time $t=0$ with an initial network $A^{(0)}$.  
The corresponding equilibrium activity levels are
$
x^{(0)} \equiv x^*\big(A^{(0)}\big),
$
and the initial utilities are
$
U^{(0)}_i \equiv U^*\big(x^{(0)}_i\big),
$
for each agent $i$.

To model the possibility that agents leave when their utility is too low, we introduce an \emph{outside utility} $u_{\mathrm{out}}^{(t)}$.  
This represents the minimum utility an agent must achieve to justify staying in the system, with lower-utility agents exiting to pursue outside opportunities.  
Examples include 
users disengaging from an online platform in favor of alternative services, 
contributors leaving a collaborative project as their engagement declines, 
or participants shifting attention to competing activities that offer higher returns.

At the initial stage, we gradually raise the outside utility until it equals the minimum utility among the currently active agents:
$
u_{\mathrm{out}}^{(0)} = \min_{i} U^{(0)}_i.
$
By construction, at least one agent, and possibly several, now has utility exactly equal to $u_{\mathrm{out}}^{(0)}$.  
These agents exit the system, and we remove them from the network.

Removing these agents changes the equilibrium of the remaining network, because their connections and payoffs are altered.  
We therefore recompute the equilibrium activities and utilities for the reduced network, holding the outside utility fixed at the current value $u_{\mathrm{out}}^{(0)}$.  
If, after this recalculation, additional agents now have utilities below $u_{\mathrm{out}}^{(0)}$, they too will exit.  
We repeat this process until no further agents leave at this fixed threshold.
At that point, the set of remaining agents is labeled as the next-stage network $A^{(1)}$, and the procedure continues with $t=1$.
This procedure is repeated until no agent remains in the system.  
At each stage, we wait for the network to fully re-equilibrate before increasing the outside utility further.
This \emph{adiabatic approximation} ensures that the network is always in a steady state before new external pressures---that is, the outside option---are increased.

This iterative removal process generates a sequence of progressively smaller networks
$
A^{(0)} \supset A^{(1)} \supset A^{(2)} \supset \cdots,
$
together with an increasing sequence of outside utility levels
$
u_{\mathrm{out}}^{(0)} < u_{\mathrm{out}}^{(1)} < u_{\mathrm{out}}^{(2)} < \cdots,
$
until all agents have exited the system.
Box~\ref{box:iterative-algorithm} summarizes our set-up and Figure \ref{fig:decay_concept} visualizes the concept. 
We refer to the SI Appendix for a proof that this process is well-defined, meaning that the final outcome does not depend on the order in which agents are removed during a cascade and that a unique stable network remains at each stage.
Going forward, we omit the superscript $(t)$ when it is clear from context that the object refers to the network at some generic time step.

\subsubsection*{The Process is Well Defined}
%%%%%%%%%%%%%%%%%%%%%%%%%%%%%%%%%%%%%%%%%%%%%%%%%%%%%%%%%%%%%%%%%%%%%%%%%%%%%%%%%%%%%%%%%%%%%%%%%%%%%%%%%%%%%%

The iterative removal process raises a nontrivial issue: when a cascade occurs, the final survivor set is not obviously unique.
Once some agents leave, the utilities of the remaining agents decrease, potentially triggering additional departures.
Different sequences of exits could therefore, in principle, produce different surviving subnetworks.
In the Methods and Appendix, we show that strategic complementarities imply a monotonicity property under which the union of stable subnetworks remains stable.
As a consequence, there exists a unique maximal stable set for every outside-utility level, implying that the final cascade outcome is independent of the order in which unstable agents are removed.

The process also admits a dynamic interpretation.
Consider agents who repeatedly decide whether to remain in the network or permanently leave as outside opportunities gradually improve over time.
Although our removal procedure updates the network one step at a time, we show in the Methods and Appendix that it is fully consistent with forward-looking strategic behavior.
An agent whose utility inside the network still exceeds the outside option has no incentive to leave immediately, since exiting remains possible later if conditions deteriorate further.
Under these conditions, the iterative removal process traces the largest self-sustaining network that can persist at each level of external pressure.

%%%%%%%%%%%%%%%%%%%%%%%%%%%%%%%%%%%%%%%%%%%%%%%%%%%%%%%%%%%%%%%%%%%%%%%%%%%%%%%%%%%%%%%%%%%%%%%%%%%%%%%%%%%%%%
\begin{mybox}{Strategic Network Abandonment}{iterative-algorithm}
\begin{enumerate}[(i)]

    \item 
    At time $t$, start with network $A^{(t)}$ and compute equilibrium activities $x^{(t)}$ and utilities $U^{(t)}_i$ via Eq. \eqref{eq:equilibrium_action} and Eq. \eqref{eq:equilibrium_utility}, respectively.
    
    \item 
    Set the outside utility to the current minimum:
    $
    u_{\mathrm{out}}^{(t)} = \min_i U^{(t)}_i.
    $
    
    \item 
    \label{its:keep_removing}
    Remove all agents with $U_i \leqslant u_{\mathrm{out}}^{(t)}$ and recompute the equilibrium on the remaining subset.
    
    \item 
    Repeat step \ref{its:keep_removing} until no further agents fall below $u_{\mathrm{out}}^{(t)}$.  
    The set of agents removed during this stage constitutes the \emph{cascade} triggered by $u_{\mathrm{out}}^{(t)}$.
    
    \item 
    Once the cascade stops, define the surviving network as $A^{(t+1)}$ and proceed to the next stage $t \mapsto t+1$.
    
\end{enumerate}
\end{mybox}
%%%%%%%%%%%%%%%%%%%%%%%%%%%%%%%%%%%%%%%%%%%%%%%%%%%%%%%%%%%%%%%%%%%%%%%%%%%%%%%%%%%%%%%%%%%%%%%%%%%%%%%%%%%%%%

%%%%%%%%%%%%%%%%%%%%%%%%%%%%%%%%%%%%%%%%%%%%%%%%%%%%%%%%%%%%%%%%%%%%%%%%%%%%%%%%%%%%%%%%%%%%%%%%%%%%%%%%%%%%%%
\subsection*{Local and Global Decay Dynamics}
%%%%%%%%%%%%%%%%%%%%%%%%%%%%%%%%%%%%%%%%%%%%%%%%%%%%%%%%%%%%%%%%%%%%%%%%%%%%%%%%%%%%%%%%%%%%%%%%%%%%%%%%%%%%%%

We now turn to the analytical characterization of the iterative decay process described above.
Our goal is to understand under what conditions a small initial set of departures remains localized or, alternatively, triggers a global cascade that ultimately results in the full abandonment of the network.

\subsubsection*{General Problem Formulation}
%%%%%%%%%%%%%%%%%%%%%%%%%%%%%%%%%%%%%%%%%%%%%%%%%%%%%%%%%%%%%%%%%%%%%%%%%%%%%%%%%%%%%%%%%%%%%%%%%%%%%%%%%%%%%%

Recall that equilibrium activities are given by Eq. \eqref{eq:equilibrium_action} so each agent's utility is influenced not only by its direct neighbors but also by all indirect connections, weighted by powers of $\beta$.
When some agents are removed, the entire inverse matrix $(I - \beta A)^{-1}$ changes, implying that every agent's utility may shift, even if it is far from the initially removed agent.
This makes the decay process non-local. 
This can be seen by expanding the inverse
\begin{equation} \label{eq:neumann}
(I - \beta A)^{-1} = \sum_{k=0}^{\infty} \beta^k A^k,
\end{equation}
which reveals that each term $A^k$ counts all walks of length $k$ among the surviving agents,  
weighted by the factor $\beta^k$ that down-weights longer paths.

%%%%%%%%%%%%%%%%%%%%%%%%%%%%%%%%%%%%%%%%%%%%%%%%%%%%%%%%%%%%%%%%%%%%%%%%%%%%%%%%%%%%%%%%%%%%%%%%%%%%%%%%%%%%%%
\begin{figure*}[!htb]
    \centering
    \includegraphics[width=\linewidth]{./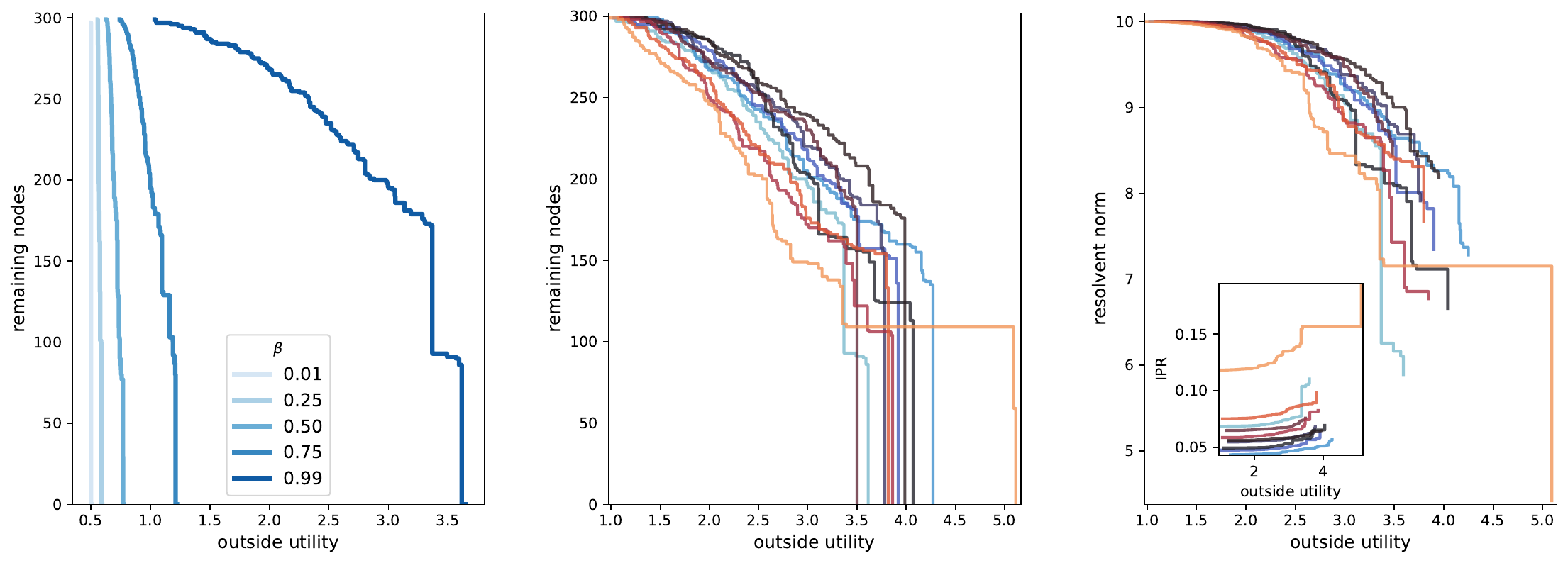}
	\caption{
	\textbf{Simulated network decay.}
	We simulate the abandonment dynamics as the outside utility increases in networks drawn from a power-law cluster ensemble with $N=300$ agents, $m=2$, and $p=0.1$.
	(\textit{Left})
	Decay dynamics in a fixed network realization (normalized to $\rho(A)=1$) under different levels of strategic complementarity $\beta$.
	For small $\beta$, decay proceeds in abrupt, punctuated steps.
	For larger $\beta$, stronger feedback effects propagate shocks more broadly, leading to a smoother and more continuous decline in network size.
	(\textit{Middle})
	Decay across $10$ independent ensemble realizations with $\beta \rho(A)=0.9$.
	In all cases, the network shrinks gradually before undergoing a sudden collapse.
	However, the collapse point varies substantially across realizations, indicating strong sensitivity to small structural differences.
	(\textit{Right})
	Evolution of the resolvent norm and the inverse participation ratio (IPR) of the leading eigenvector for the same realizations as in the middle panel.
	Both measures evolve smoothly and exhibit no clear precursors to the eventual collapse, reflecting the absence of structural reorganization prior to failure.
	}
    \label{fig:beta_and_stochasticity_dependence}
\end{figure*}
%%%%%%%%%%%%%%%%%%%%%%%%%%%%%%%%%%%%%%%%%%%%%%%%%%%%%%%%%%%%%%%%%%%%%%%%%%%%%%%%%%%%%%%%%%%%%%%%%%%%%%%%%%%%%%

\subsubsection*{Approximate Local Dynamics}
%%%%%%%%%%%%%%%%%%%%%%%%%%%%%%%%%%%%%%%%%%%%%%%%%%%%%%%%%%%%%%%%%%%%%%%%%%%%%%%%%%%%%%%%%%%%%%%%%%%%%%%%%%%%%%

The problem simplifies considerably when $\beta$ is small relative to network connectivity.
As shown in the Methods section, in this limit an agent exits once it has lost sufficiently many immediate neighbors.
Specifically, each agent $j$ is associated with an integer threshold
\begin{equation}
	\label{eq:integer_threshold}
	\theta_j = \left\lceil \frac{ U_j - u_\text{out}^{(t)} }{\alpha^2 \beta} \right\rceil,
\end{equation}
and fails whenever the number of failed neighbors meets or exceeds $\theta_j$.

This small-$\beta$ approximation is mathematically equivalent to a \emph{generalized $k$-core peeling process}, also known from heterogeneous percolation \cite{Watts2002,Gleeson2007,Holroyd2003}.
A key insight is that large-scale cascades require a sufficient density of \emph{vulnerable} agents whose thresholds are low enough to be triggered by a small initial shock.
When such agents are scattered throughout the network, a single failure typically remains localized.
However, if they form a connected cluster that spans a macroscopic fraction of the network, an initial departure can ignite a self-sustaining chain reaction.
This leads to a sharp transition: as the fraction of vulnerable agents or the network connectivity crosses a critical value, the cascade size jumps abruptly from negligible to system-wide.

\subsubsection*{Global Dynamics and Limits of Predictability}
%%%%%%%%%%%%%%%%%%%%%%%%%%%%%%%%%%%%%%%%%%%%%%%%%%%%%%%%%%%%%%%%%%%%%%%%%%%%%%%%%%%%%%%%%%%%%%%%%%%%%%%%%%%%%%

Empirical evidence suggests that the low-$\beta$ approximation is rarely adequate in economic and social network games.
In most applications, individuals’ actions are strongly interdependent, and payoffs exhibit pronounced strategic complementarities rather than weak, local interactions.
Field and laboratory studies of linear–quadratic network games consistently estimate $\beta$ in the range of $0.2$ to $0.8$, implying substantial mutual reinforcement of actions across connected agents \citep{Bramoulle2009,CalvoArmengol2009,Aral2012}.
These magnitudes indicate that real networks typically operate in a high-$\beta$ regime, where behavior depends on direct but also on indirect connections and local shocks propagate widely through feedback effects, yet remain well-defined in the spectral sense, with $\beta\rho(A) \lesssim 1$ ensuring a unique and stable equilibrium.
Accordingly, the small-$\beta$ limit, where utilities depend only on immediate neighbors and cascades follow simple threshold rules, captures only a narrow slice of economically relevant behavior.

To illustrate this, we generate a network of $N = 300$ agents from the power-law cluster (Holme–Kim) ensemble which combines preferential attachment with triadic closure.
Each new agent forms $m = 2$ links to existing agents, and with probability $p = 0.1$ each new edge closes a triangle, producing a network with both a heavy-tailed degree distribution and realistic clustering typical of social and collaboration networks.
Finally, the edge weights are uniformly rescaled such that $\rho(A) = 1$.
We then simulate the network decay with varying $\beta$ from $0.01$ to $0.99$. 
As shown in Figure~\ref{fig:beta_and_stochasticity_dependence} (left), 
for small $\beta$, where utilities depend primarily on local neighbors, the decay proceeds in abrupt, punctuated steps.
When $\beta$ is larger, in contrast, indirect feedbacks diffuse the impact of each agent’s departure more widely across the network, 
leading to a more gradual reduction in overall activity.

This dependence on long-range reinforcement mirrors the behavior of globally coupled failure models, such as \emph{fiber-bundle models} with global load sharing \cite{Sethna1993,Johansen2000,Alava2006,Pradhan2010}.  
In these systems, failures redistribute load across all surviving components, generating a metastable regime that persists until a global stability condition is violated.  
Collapse then occurs abruptly and system-wide, without early-warning signals typical of second-order transitions, and with substantial variability in the critical point across realizations.

A similar pattern arises in our strategic network abandonment model.
Exiting agents trigger equilibrium responses throughout the network, leading to a metastable phase of gradual shrinkage followed by a sudden, discontinuous collapse.
The combination of abruptness and sensitivity to fine structural differences makes the timing of collapse inherently difficult to predict in the high-$\beta$ regime.

Figure~\ref{fig:beta_and_stochasticity_dependence} (middle and right) illustrates this behavior across $10$ realizations of the same network ensemble. 
The number of remaining agents declines gradually before collapsing at different outside options across realizations, despite identical model parameters.  
The resolvent norm $\|(I - \beta A)^{-1}\|_2$, which serves as a measure of global susceptibility, decreases smoothly and provides no advance signal of the impending rupture.  
Likewise, the inverse participation ratio (IPR) of the leading eigenvector remains nearly flat, exhibiting only a sharp spike at collapse. 
Thus, although the dynamics are deterministic conditional on the network, small structural differences generate substantial variability in collapse timing.
We refer to the SI Appendix for an exact sufficient condition characterizing global collapse.

%%%%%%%%%%%%%%%%%%%%%%%%%%%%%%%%%%%%%%%%%%%%%%%%%%%%%%%%%%%%%%%%%%%%%%%%%%%%%%%%%%%%%%%%%%%%%%%%%%%%%%%%%%%%%%
\subsection*{Empirical Analysis}
%%%%%%%%%%%%%%%%%%%%%%%%%%%%%%%%%%%%%%%%%%%%%%%%%%%%%%%%%%%%%%%%%%%%%%%%%%%%%%%%%%%%%%%%%%%%%%%%%%%%%%%%%%%%%%

%%%%%%%%%%%%%%%%%%%%%%%%%%%%%%%%%%%%%%%%%%%%%%%%%%%%%%%%%%%%%%%%%%%%%%%%%%%%%%%%%%%%%%%%%%%%%%%%%%%%%%%%%%%%%%
\begin{figure*}[!htb]
    \centering
    \includegraphics[width=\linewidth]{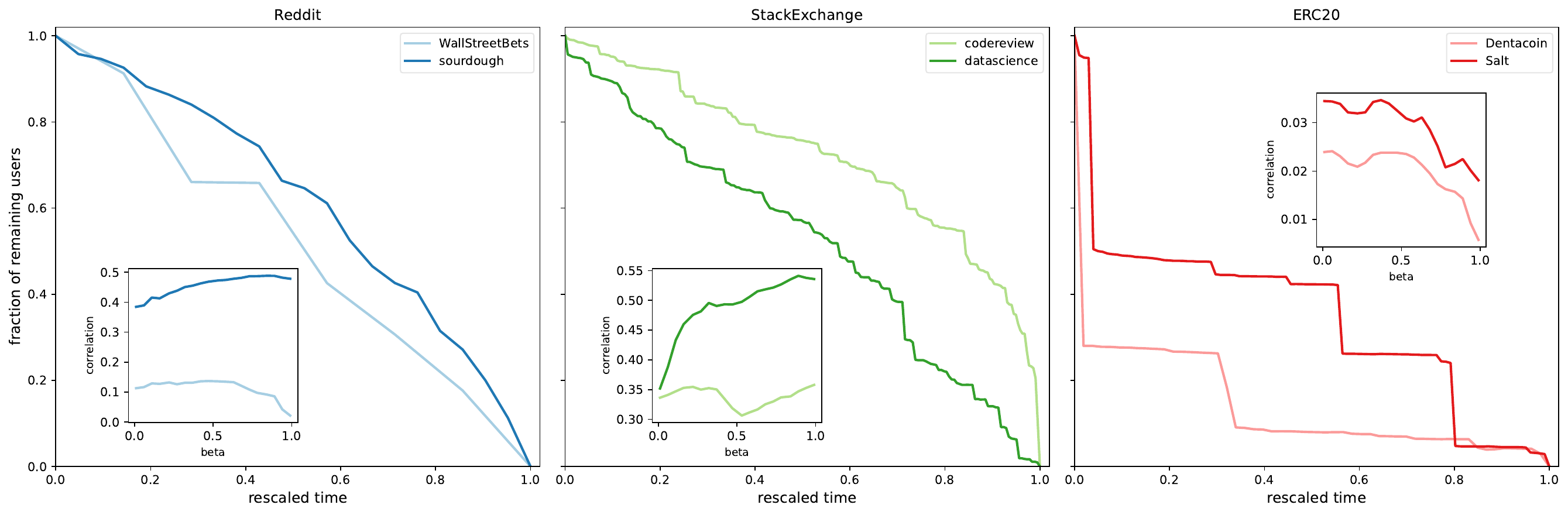}
	\caption{
	\textbf{Decay of real-world networks.}
	We analyze six socio-economic systems: 
	two abandoned ERC20 cryptocurrency projects, 
	two subreddits with declining activity, 
	and two Stack Exchange forums experiencing sustained disengagement.
	(\textit{Top})
	Temporal evolution of participation in each network.
	Each panel shows two representative examples per domain.
	Cryptocurrency projects exhibit relatively rapid collapse, whereas Reddit and Stack Exchange communities tend to decay more gradually.
	(\textit{Insets})
	For each network, we compare the empirical exit ordering of agents to the ordering implied by the model across different values of the complementarity parameter $\beta$.
	We report the resulting rank correlation as a function of $\beta$ and define $\beta^*$ as the value that maximizes this correspondence.
	}
    \label{fig:empirical_abandoment_overview}
\end{figure*}
%%%%%%%%%%%%%%%%%%%%%%%%%%%%%%%%%%%%%%%%%%%%%%%%%%%%%%%%%%%%%%%%%%%%%%%%%%%%%%%%%%%%%%%%%%%%%%%%%%%%%%%%%%%%%%

We now apply our framework to real-world socio-economic networks that exhibit sustained decline in participation.
Our goal is to assess whether the strategic abandonment model can reproduce the observed ordering of agent exits and how its performance compares to standard structural benchmarks.

We consider three classes of networks: Reddit interaction networks, Stack Exchange communication networks, and ERC20 cryptocurrency transaction networks. 
Specifically, we study 
two large subreddits (\textit{WallStreetBets} and \textit{sourdough}) whose activity subsided after the COVID-era surge in online engagement, 
two Stack Exchange communities (\textit{codereview} and \textit{datascience}) that have seen reduced participation with the rise of large language models as alternative knowledge sources,
and two ERC20 projects (\textit{Dentacoin} and \textit{Salt}) that experienced sharp decline following the post-2018 contraction in crypto markets. 
We refer to the Methods section for details. 
These settings are particularly well suited to our framework, as participation is driven by platform-specific incentives and subject to increasingly attractive outside options, making strategic disengagement the central mechanism of decline.
In each case, we construct temporal graphs from time-stamped interaction data by aggregating edges over fixed time intervals and retaining only users that are active over sufficiently long periods.
We focus on the declining phase following peak participation and track the induced sequence of progressively shrinking graphs as activity falls over time.
Figure~\ref{fig:empirical_abandoment_overview} provides an overview of the resulting decay dynamics across these systems. 

To evaluate the model, we compare the empirical order in which agents exit the network to the ordering implied by the strategic abandonment model for different values of the complementarity parameter $\beta$.
Because the equilibrium condition depends on the product $\beta \rho(A)$, edge weights are only identified up to a multiplicative scale.
We therefore normalize the initial adjacency matrix such that $\rho(A)=1$, allowing $\beta\in(0,1)$ to directly represent the effective strength of strategic complementarities.
For each $\beta$, we simulate the decay process and compute the Spearman rank correlation between the empirical and model-implied exit orders.
This yields a mapping from $\beta$ to predictive performance, allowing us to identify the value $\beta^*$ that best aligns the model with the data.
The insets in Figure~\ref{fig:empirical_abandoment_overview} visualize this dependence for each network.

We benchmark our model against three standard centrality-based heuristics computed on the initial network: Bonacich centrality (BC), eigenvector centrality (EV), and $k$-core number (KC).
Each of these induces a predicted exit ordering, which we compare to the empirical ordering using the same Spearman rank-correlation metric.

Table~\ref{tbl:empirical_performance} summarizes the results across all datasets.
A clear pattern emerges.
In systems where the estimated complementarity $\beta^*$ is small (ERC20 networks), 
the predictive performance of the strategic abandonment model is comparable to that of the $k$-core measure, consistent with the theoretical result that the model reduces to a local threshold process in the low-$\beta$ regime.
In contrast, in systems with large $\beta^*$, the model behaves similarly to eigenvector and Bonacich centrality computed on the initial network, consistent with the increasing importance of long-range network amplification in the high-$\beta$ regime.
Most notably, for intermediate values of $\beta^*$, the strategic abandonment model consistently outperforms all benchmarks, including Bonacich centrality.
This highlights the importance of modeling the endogenous re-equilibration of incentives during network decline, rather than relying solely on static centrality measures computed from the initial network.

%%%%%%%%%%%%%%%%%%%%%%%%%%%%%%%%%%%%%%%%%%%%%%%%%%%%%%%%%%%%%%%%%%%%%%%%%%%%%%%%%%%%%%%%%%%%%%%%%%%%%%%%%%%%%%
\begin{table}[t]
\centering
\footnotesize
\setlength{\tabcolsep}{4pt}
\renewcommand{\arraystretch}{0.9}

\begin{tabular}{llc@{\hspace{6pt}}|@{\hspace{6pt}}rrrr}
\toprule
 &  & $\beta^*$ & SA & KC & EV & BC \\
\midrule

\multirow{2}{*}{Reddit}
 & WallStreetBets & 0.47 & 13.7 &  8.4 & 7.5 & 9.7 \\
 & sourdough      & 0.84 & 48.8 & 33.0 & 48.7 & 45.3 \\
\cmidrule(lr){1-7}

\multirow{2}{*}{StackExchange}
 & codereview  & 0.99 & 37.8 & 40.7 & 41.2 & 34.9 \\
 & datascience & 0.89 & 54.1 & 30.9 & 53.6 & 55.7 \\
\cmidrule(lr){1-7}

\multirow{2}{*}{ERC20}
 & Dentacoin & 0.06 & 2.4 & 1.9 & 1.3 & 0.8 \\
 & Salt      & 0.37 & 3.5 & 10.4 & 2.1 & 4.1 \\

\bottomrule
\end{tabular}

\caption{
Comparison of model performance across datasets.
We report $\beta^*$, the value of the complementarity parameter that maximizes the rank correlation between the strategic abandonment model (SA) and the empirical decay.
We further report the corresponding correlations (in percent) for SA and three benchmarks: $k$-core centrality (KC), eigenvector centrality (EV), and Bonacich centrality (BC).
}
\label{tbl:empirical_performance}
\end{table}
%%%%%%%%%%%%%%%%%%%%%%%%%%%%%%%%%%%%%%%%%%%%%%%%%%%%%%%%%%%%%%%%%%%%%%%%%%%%%%%%%%%%%%%%%%%%%%%%%%%%%%%%%%%%%%

%%%%%%%%%%%%%%%%%%%%%%%%%%%%%%%%%%%%%%%%%%%%%%%%%%%%%%%%%%%%%%%%%%%%%%%%%%%%%%%%%%%%%%%%%%%%%%%%%%%%%%%%%%%%%%
\subsection*{How to Sustain a Strategic Network}
%%%%%%%%%%%%%%%%%%%%%%%%%%%%%%%%%%%%%%%%%%%%%%%%%%%%%%%%%%%%%%%%%%%%%%%%%%%%%%%%%%%%%%%%%%%%%%%%%%%%%%%%%%%%%%

\subsubsection*{Top-Down Welfare Maximization}
%%%%%%%%%%%%%%%%%%%%%%%%%%%%%%%%%%%%%%%%%%%%%%%%%%%%%%%%%%%%%%%%%%%%%%%%%%%%%%%%%%%%%%%%%%%%%%%%%%%%%%%%%%%%%%

In many socio-economic systems, it may be desirable to prevent or delay network decay by providing targeted support to selected agents.
Such support can take the form of platform incentives, preferential visibility, moderation or governance privileges, financial rewards, or other mechanisms that increase an agent’s stand-alone incentive to remain active in the system.
Up to this point, the individual incentive parameter $\alpha$ has been assumed homogeneous across agents.
We now relax this assumption and allow for heterogeneous incentives $\alpha_i$, reflecting that some agents may receive greater external support or preferential treatment than others.
In this case, the equilibrium actions in Eq.~\eqref{eq:equilibrium_action} become  
\begin{equation}
x = (I - \beta A)^{-1}\boldsymbol{\alpha},
\end{equation}
with $\boldsymbol{\alpha} = (\alpha_1, \ldots, \alpha_N)$.  

A straightforward approach to designing support policies is to allocate a fixed budget so as to maximize overall welfare in the network.  
This leads to the incentive-targeting problem, where the welfare-maximizing allocation aligns support with the dominant eigenvector of the adjacency matrix $A$ \cite{Galeotti2020}.
Such a policy concentrates resources on the most central and strategically influential agents and can be interpreted as a \emph{top-down} or \emph{trickle-down} strategy in which benefits are expected to diffuse through the network via complementarities.

\subsubsection*{Stabilizing Vulnerable Agents}
%%%%%%%%%%%%%%%%%%%%%%%%%%%%%%%%%%%%%%%%%%%%%%%%%%%%%%%%%%%%%%%%%%%%%%%%%%%%%%%%%%%%%%%%%%%%%%%%%%%%%%%%%%%%%%

The top-down welfare-maximizing approach does not account for the presence of an outside option that may cause low-utility agents to exit.  
By directing support primarily toward already central agents, a trickle-down strategy may leave peripheral or disadvantaged agents below the outside-utility threshold.  
These agents then exit the network despite increases in aggregate welfare, which in turn may trigger further departures and ultimately reduce both participation and overall welfare.  
This motivates a complementary perspective that focuses explicitly on stabilizing vulnerable agents rather than maximizing welfare at a single point in time.

Rather than maximizing total welfare, our immediate objective is thus to ensure that no agent’s equilibrium utility falls below the outside option $u_{\mathrm{out}}$.
This formulation captures a stabilization policy aimed at preventing further network decay by supporting the agents that are closest to leaving.
We seek the minimal additional support $\boldsymbol{y} \geqslant 0$ such that the updated incentives $\boldsymbol{\alpha}' = \boldsymbol{\alpha} + \boldsymbol{y}$ yield equilibrium utilities above the outside option for all agents:
\begin{equation}
\label{eq:stabilization_constraint}
U_i^\ast(\boldsymbol{\alpha}')
=
\frac{1}{2}
\Big(
\big[(I - \beta A)^{-1}\boldsymbol{\alpha}'\big]_i
\Big)^{2}
\;\geqslant\;
u_{\mathrm{out}}
\end{equation}
for all $i$.
The corresponding optimization problem can be written as
\begin{equation}
\label{eq:minimal_support_problem}
\begin{aligned}
&\min_{y \geqslant 0} && \mathbf{1}^\top \boldsymbol{y} \\
&\text{subject to} && (I - \beta A)^{-1}(\boldsymbol{\alpha} + \boldsymbol{y}) \;\geqslant\; \sqrt{2\,u_{\mathrm{out}}}\,\mathbf{1}.
\end{aligned}
\end{equation}
Equation~\eqref{eq:minimal_support_problem} defines a linear optimization problem with a convex feasible set.
Since both the objective and the constraints are linear, the problem can be solved efficiently with convex programming methods.

Figure~\ref{fig:support_schemes} displays the solution by illustrating how total welfare evolves under our minimal-support scheme (bottom-up approach, BU) as the outside option increases.
As expected, higher outside options require larger total support $\|\mathbf{y}^*\|$, since more incentives are needed to keep all agents above the threshold.
Nevertheless, by construction, the number of active agents in the system remains constant, and overall welfare continues to rise with additional support.
This pattern stands in sharp contrast to conventional welfare-maximizing interventions (top-down approach, TD) \citep{Galeotti2020}.
When the same total budget $\|\mathbf{y}^*\|$ is allocated to the high-centrality agents according to the TD approach, 
the resulting trickle-down effects fail to sustain the least advantaged agents.
Some agents still fall below the participation threshold, causing exits that ultimately reduce total welfare.

%%%%%%%%%%%%%%%%%%%%%%%%%%%%%%%%%%%%%%%%%%%%%%%%%%%%%%%%%%%%%%%%%%%%%%%%%%%%%%%%%%%%%%%%%%%%%%%%%%%%%%%%%%%%%%
\begin{figure}[!htb]
    \centering
    \includegraphics[width=\linewidth]{./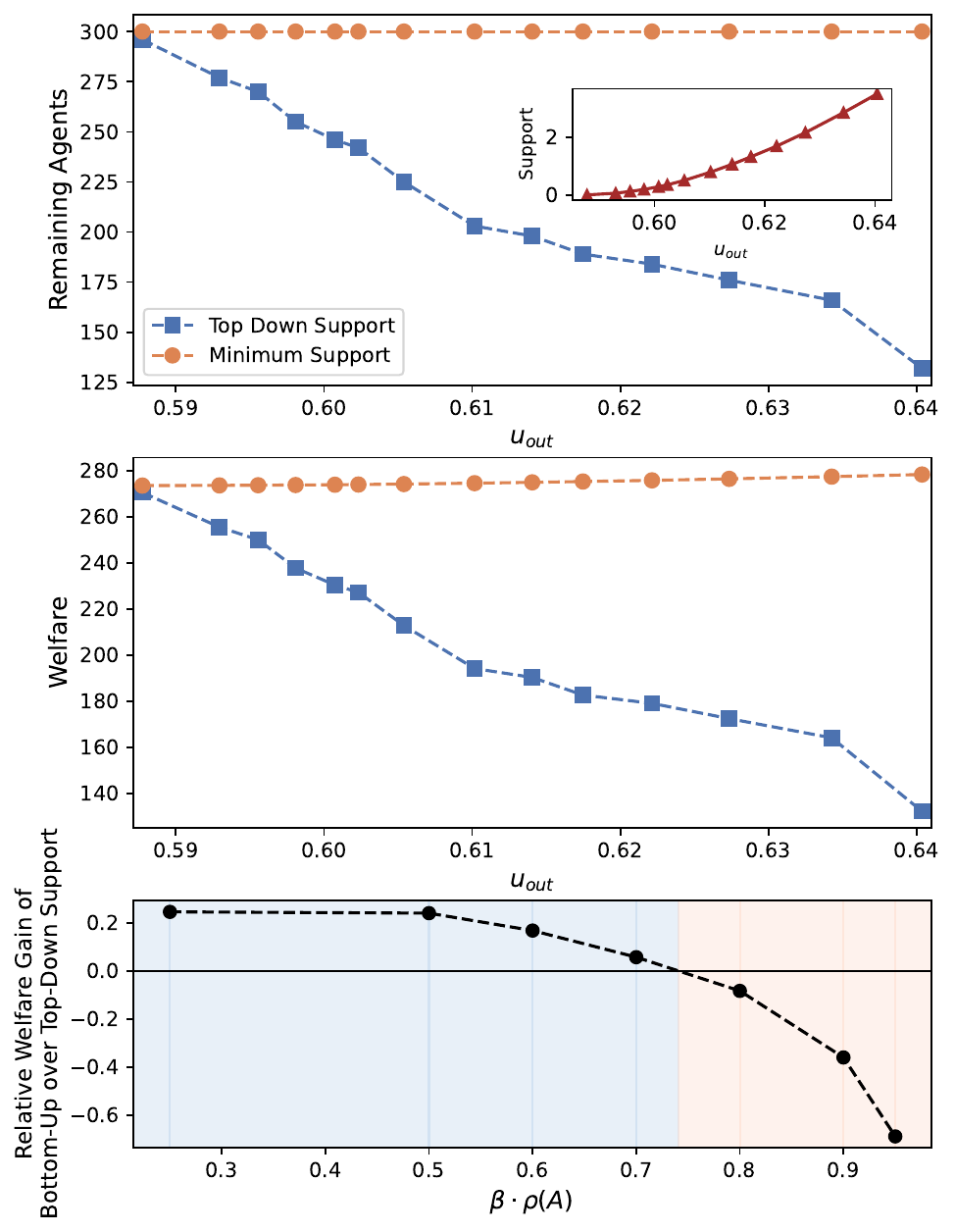}
	\caption{
    \textbf{Comparison of network support schemes.}
    We consider the same power-law cluster graph as in Figure~\ref{fig:beta_and_stochasticity_dependence} with $\beta \rho(A)=0.5$, subject to varying outside options.
    Inset: 
    Solution of Eq. \eqref{eq:minimal_support_problem}, i.e. total support $\|\mathbf{y}^*\|$ required to keep all agents active as the outside option increases.
    Top: 
    Number of active agents remaining in the network under both support schemes:
    the bottom-up scheme (BU) directly targets weak agents and maintains full network integrity.
    The welfare-maximizing top-down policy (TD), with the same budget as BU, fails to sustain participation of low-utility agents.
    Middle:  
    Evolution of total welfare under the two support schemes.
    The BU approach raises welfare even as outside options rise, 
    whereas a TD one leads to declining aggregate utility.
    Bottom: 
    Relative effectiveness of the two support schemes as a function of the spectral radius.
	}
    \label{fig:support_schemes}
\end{figure}
%%%%%%%%%%%%%%%%%%%%%%%%%%%%%%%%%%%%%%%%%%%%%%%%%%%%%%%%%%%%%%%%%%%%%%%%%%%%%%%%%%%%%%%%%%%%%%%%%%%%%%%%%%%%%%

Which approach is more effective depends critically on the spectral structure of the underlying network (Figure~\ref{fig:support_schemes}, bottom).
The spectral radius provides a natural diagnostic for choosing between a top-down, welfare-maximizing intervention and a bottom-up, vulnerability-based stabilization strategy.
When the spectral radius $\beta \rho(A)$ is large, the equilibrium map $(I - \beta A)^{-1}$ amplifies shocks strongly and distributes them broadly across the network.
In this regime, strategic complementarities are powerful: increases in the incentives of central agents propagate widely, indirectly raising the utilities of many others.  
As a result, allocating support according to the welfare-maximizing solution is highly efficient. 
In contrast, when the spectral radius is small, the network’s amplification capacity is limited and the indirect effects of central agents are weak.  
Spillovers dissipate quickly, and increases in the incentives of well-connected agents fail to lift low-utility agents above the outside option.  
In this regime, a welfare-maximizing strategy can be ineffective or even counterproductive:
it directs resources toward agents who already benefit from strong positions in the network, while those at the margin remain vulnerable to exit.  
Stabilizing the network instead requires targeted support to these vulnerable agents.

\subsubsection*{Creating Global Value in Strategic Networks}
%%%%%%%%%%%%%%%%%%%%%%%%%%%%%%%%%%%%%%%%%%%%%%%%%%%%%%%%%%%%%%%%%%%%%%%%%%%%%%%%%%%%%%%%%%%%%%%%%%%%%%%%%%%%%%

Top-down interventions are attractive because they are parsimonious and scalable: when complementarities are strong, supporting a small set of central agents can indirectly stabilize large parts of the network through spillovers.  
Their effectiveness, however, hinges on the network’s amplification capacity, summarized by the factor $\beta \rho(A)$.  
This raises a natural design question: how can a network increase its amplification capacity so that centralized support diffuses broadly and efficiently rather than remaining confined to a few well-connected agents?

One important mechanism operates through global externalities that are not reducible to local connections.  
In many socio-economic systems, the value of participation derives not only from direct interactions but from the overall quality, visibility, or legitimacy of the system as a whole.  
Examples include shared infrastructure, collective reputation, institutional credibility, or a common public-good environment.  
To capture this idea, the local interaction structure can be augmented with a global complementarity component, reflecting that agents benefit from aggregate activity in proportion to their own engagement.  
This leads to an extended utility function of the form
\begin{equation}
	\label{eq:linquad_utility_extended}
	U_i
	=
	\alpha x_i
	-
	\frac{1}{2} x_i^2
	+
	\beta x_i \sum_{j=1}^N A_{ij} x_j
	+
	\gamma x_i \left( \frac{1}{N} \sum_{j=1}^N x_j \right)
\end{equation}
where $\gamma>0$ captures the strength of the global externality.
The additional term introduces a system-level feedback: individual incentives now depend partly on the average activity of the entire network, not only on local neighborhoods.

From a spectral perspective, this global component acts as a uniform externality that increases the spectral radius of the effective interaction matrix (see SI Appendix for a proof).  
Intuitively, global feedback links every agent to every other agent, reinforcing indirect paths and increasing the system’s capacity to amplify incentives.  
As a result, the network moves toward a regime in which top-down interventions become more powerful: support targeted at central agents propagates not only through local connections but through the global channel, indirectly stabilizing peripheral and otherwise vulnerable agents.

In the context of online platforms and social networks, this highlights the importance of investments that raise the collective value of participation rather than merely strengthening local ties.  
Examples include platform-wide reputation systems, shared norms and governance structures, high-quality moderation, common identity or branding, and public-good features such as search, recommendation, or discovery tools.  
By increasing the global attractiveness of the system, such features raise the effective level of strategic complementarity and make centralized support schemes more effective at sustaining participation.

%%%%%%%%%%%%%%%%%%%%%%%%%%%%%%%%%%%%%%%%%%%%%%%%%%%%%%%%%%%%%%%%%%%%%%%%%%%%%%%%%%%%%%%%%%%%%
%%%%%%%%%%%%%%%%%%%%%%%%%%%%%%%%%%%%%%%%%%%%%%%%%%%%%%%%%%%%%%%%%%%%%%%%%%%%%%%%%%%%%%%%%%%%%
\section*{Discussion \& Future Work}
%%%%%%%%%%%%%%%%%%%%%%%%%%%%%%%%%%%%%%%%%%%%%%%%%%%%%%%%%%%%%%%%%%%%%%%%%%%%%%%%%%%%%%%%%%%%%
%%%%%%%%%%%%%%%%%%%%%%%%%%%%%%%%%%%%%%%%%%%%%%%%%%%%%%%%%%%%%%%%%%%%%%%%%%%%%%%%%%%%%%%%%%%%%

We have developed a framework for network decay driven by strategic exit in the presence of improving outside options.  
The central insight is that decay dynamics are governed by the strength of strategic complementarities, summarized by the amplification factor $\beta \rho(A)$.  
When complementarities are weak, exit follows a local threshold-cascade logic, where failures propagate through vulnerable neighborhoods.  
When complementarities are strong, departures transmit globally, producing rupture-like dynamics with abrupt collapse and limited predictive power of standard structural metrics.  
These regimes also imply different policy prescriptions: stabilizing vulnerable agents is most effective under weak complementarities, whereas targeting central agents is optimal when global feedbacks dominate.

Several directions for future work follow naturally from the limitations of our framework.  
First, our model assumes a single network with a uniform outside option and symmetric interactions between agents.  
In many socio-economic systems, however, outside opportunities arise from alternative networks, and agents can strategically relocate rather than simply exit.  
Modeling multiple competing networks would allow outside options to emerge endogenously and would naturally accommodate both entry and exit, rather than imposing the monotonic decay structure studied here.  
Such an extension would also capture settings in which highly central or productive agents leave first, for instance when they have better alternatives elsewhere.

Second, we assume that interactions are symmetric, whereas many real-world networks exhibit directional or asymmetric dependencies, such as influence hierarchies or supply-chain relationships.  
Allowing for asymmetric interaction weights would introduce richer patterns of propagation, where shocks may travel differently across directions and create new forms of fragility.

Finally, our analysis relies on an adiabatic adjustment process in which the network fully re-equilibrates after each wave of exits.  
Relaxing this assumption to allow for non-equilibrium dynamics, where agents adjust faster than the system can stabilize, may generate additional transient instabilities and path-dependent outcomes.  
Understanding how such timing frictions interact with strategic complementarities remains an important avenue for future research.

%%%%%%%%%%%%%%%%%%%%%%%%%%%%%%%%%%%%%%%%%%%%%%%%%%%%%%%%%%%%%%%%%%%%%%%%%%%%%%%%%%%%%%%%%%%%%
\section*{Acknowledgments}
%%%%%%%%%%%%%%%%%%%%%%%%%%%%%%%%%%%%%%%%%%%%%%%%%%%%%%%%%%%%%%%%%%%%%%%%%%%%%%%%%%%%%%%%%%%%%

We thank Didier Sornette, Alec Kirkley, Morgan Frank, and Seth Benzell for helpful discussions and insightful comments that substantially improved the clarity and framing of this work.

%%%%%%%%%%%%%%%%%%%%%%%%%%%%%%%%%%%%%%%%%%%%%%%%%%%%%%%%%%%%%%%%%%%%%%%%%%%%%%%%%%%%%%%%%%%%%
\bibliographystyle{unsrt}
\bibliography{bibliography.bib}
%%%%%%%%%%%%%%%%%%%%%%%%%%%%%%%%%%%%%%%%%%%%%%%%%%%%%%%%%%%%%%%%%%%%%%%%%%%%%%%%%%%%%%%%%%%%%

%%%%%%%%%%%%%%%%%%%%%%%%%%%%%%%%%%%%%%%%%%%%%%%%%%%%%%%%%%%%%%%%%%%%%%%%%%%%%%%%%%%%%%%%%%%%%
%%%%%%%%%%%%%%%%%%%%%%%%%%%%%%%%%%%%%%%%%%%%%%%%%%%%%%%%%%%%%%%%%%%%%%%%%%%%%%%%%%%%%%%%%%%%%
\section{Methods}
%%%%%%%%%%%%%%%%%%%%%%%%%%%%%%%%%%%%%%%%%%%%%%%%%%%%%%%%%%%%%%%%%%%%%%%%%%%%%%%%%%%%%%%%%%%%%
%%%%%%%%%%%%%%%%%%%%%%%%%%%%%%%%%%%%%%%%%%%%%%%%%%%%%%%%%%%%%%%%%%%%%%%%%%%%%%%%%%%%%%%%%%%%%

%%%%%%%%%%%%%%%%%%%%%%%%%%%%%%%%%%%%%%%%%%%%%%%%%%%%%%%%%%%%%%%%%%%%%%%%%%%%%%%%%%%%%%%%%%%%%
\subsection*{Equilibrium Structure of the Decay Process}
\label{sec:methods_equilibrium}
%%%%%%%%%%%%%%%%%%%%%%%%%%%%%%%%%%%%%%%%%%%%%%%%%%%%%%%%%%%%%%%%%%%%%%%%%%%%%%%%%%%%%%%%%%%%%

At a fixed outside utility level $u_{\mathrm{out}}$, let $S \subseteq \{1,\dots,N\}$ denote a subset of surviving agents and let $A[S]$ denote the adjacency matrix of the induced subnetwork on $S$.
The equilibrium activity levels on this induced network are
$
x^*(S)
=
\alpha (I-\beta A[S])^{-1}\mathbf 1_S,
$
provided that $\beta \rho(A[S]) < 1$.
The corresponding equilibrium utilities are
$
U_i^*(S)
=
\frac12 \bigl(x_i^*(S)\bigr)^2.
$

A subset $S$ is said to be \emph{stable} at outside utility $u_{\mathrm{out}}$ if
$
U_i^*(S)\ge u_{\mathrm{out}}
$
for all  $i\in S$.
Stable subnetworks need not be unique.
For example, after peripheral agents exit, a smaller dense core may remain self-sustaining even though the full network is no longer stable.

The key structural property underlying the decay process is monotonicity.
Using the Neumann expansion \eqref{eq:neumann}, 
equilibrium activities can be interpreted as weighted sums over walks on the induced network.
Adding agents to the surviving set introduces additional nonnegative walks and therefore weakly increases the equilibrium utility of all remaining agents.
As a consequence, the union of stable subnetworks is itself stable.
This implies the existence of a unique maximal stable subset at every outside-utility level,
$
S^*(u_{\mathrm{out}})
=
\bigcup_{S \in \mathcal S(u_{\mathrm{out}})} S,
$
where $\mathcal S(u_{\mathrm{out}})$ denotes the family of stable subnetworks.

The iterative removal process (Box~\ref{box:iterative-algorithm}) converges precisely to this maximal stable subset.
Starting from the full network, unstable agents are removed, equilibrium utilities are recomputed on the reduced network, and the procedure is repeated until all remaining agents satisfy the stability condition.
The monotonicity property implies that the final survivor set is independent of the order in which unstable agents are removed during a cascade.

The same structure also yields a dynamic interpretation.
Consider a repeated participation game in which agents repeatedly decide whether to remain active or irreversibly exit while outside opportunities improve over time.
At each stage, agents compare their equilibrium utility inside the network to the prevailing outside option.
Under strategic complementarities, the maximal stable subset $S^*(u_{\mathrm{out}})$ coincides with the inclusion-maximal subgame-perfect equilibrium of the corresponding dynamic game.
Hence the iterative removal process traces the largest dynamically sustainable network as external pressure increases.

Formal proofs of monotonicity, existence and uniqueness of maximal stable subnetworks, and the dynamic equilibrium characterization are provided in the Appendix.

%%%%%%%%%%%%%%%%%%%%%%%%%%%%%%%%%%%%%%%%%%%%%%%%%%%%%%%%%%%%%%%%%%%%%%%%%%%%%%%%%%%%%%%%%%%%%
\subsection*{Derivation of Local Threshold Rule}
\label{apx:threshold_derivation}
%%%%%%%%%%%%%%%%%%%%%%%%%%%%%%%%%%%%%%%%%%%%%%%%%%%%%%%%%%%%%%%%%%%%%%%%%%%%%%%%%%%%%%%%%%%%%

Suppose that a set of agents $S$ leaves the network, leaving the remaining active agents $R$.
We can then decompose the adjacency matrix into four blocks, $A_{RR}, A_{RS}, A_{SR}$ and $A_{SS}$, where $A_{SR}$ captures cross-connections between two groups $S, R \subseteq \{1, 2, \dots, N\}$, and similar for the other matrix blocks. 
The change in the equilibrium activity levels of the surviving agents due to the departure of $S$ then becomes 
\begin{equation}
	\label{eq:activity_change}
	\Delta x_R^{(-S)} = \alpha \big[(I - \beta A_{RR})^{-1} \beta A_{RS} \mathbf{1}_S \big],
\end{equation}
where $\mathbf{1}_S$ is a vector of ones over the departing agents.
This expression shows that the impact of a departure propagates along paths of all lengths through the surviving network, with longer paths discounted by higher powers of $\beta$.  

The problem simplifies considerably when $\beta$ is small relative to the network's connectivity.
Formally, when $\beta \rho(A) \ll 1$, we can approximate the inverse as
$
(I - \beta A)^{-1} = I + \beta A + O(\beta^2).
$
Substituting into the equilibrium, we obtain
\begin{equation}
	\label{eq:action_small_beta_approx}
	x_i \approx \alpha \big(1 + \beta d_i \big),
\end{equation}
where $d_i = \sum_j A_{ij}$ is the (weighted) degree of agent $i$.

Suppose a set of neighbors $S$ of agent $i$ leaves the network.
Let the total weight of the removed edges be
$
m_i = \sum_{j \in S} A_{ij}.
$
The new action of agent $i$ becomes
\begin{equation}
	x_i' \approx \alpha \big(1+\beta(d_i - m_i)\big) = x_i - \alpha \beta m_i.
\end{equation}
Equilibrium utility is given by Eq. \eqref{eq:equilibrium_utility}. 
For small changes, a first-order approximation yields
\begin{equation}
\Delta U_i = U_i' - U_i \approx x_i \Delta x_i.
\end{equation}
Substituting $\Delta x_i = -\alpha \beta m_i$ and using $x_i \approx \alpha$ to leading order, we obtain
\begin{equation}
\Delta U_i \approx -\,\alpha^2 \beta m_i.
\end{equation}
At time $t$, let the outside utility be $u_{\mathrm{out}}^{(t)}$ and define the reserve utility of agent $i$ as
\begin{equation}
r_i = U_i - u_{\mathrm{out}}^{(t)}.
\end{equation}
Agent $i$ fails when $U_i' \leqslant u_{\mathrm{out}}^{(t)}$, i.e.
\begin{equation}
U_i + \Delta U_i \leqslant u_{\mathrm{out}}^{(t)}
\quad \Longleftrightarrow \quad
\alpha^2 \beta m_i \geqslant r_i.
\end{equation}
In a weighted network, failure occurs when the total lost weight $m_i$ exceeds $r_i / (\alpha^2 \beta)$.
In an unweighted network, $m_i$ equals the number of failed neighbors.
We therefore define the integer threshold
\begin{equation}
\theta_i = \left\lceil \frac{r_i}{\alpha^2 \beta} \right\rceil,
\end{equation}
so that agent $i$ fails whenever the number of failed neighbors is greater than or equal to $\theta_i$.

%%%%%%%%%%%%%%%%%%%%%%%%%%%%%%%%%%%%%%%%%%%%%%%%%%%%%%%%%%%%%%%%%%%%%%%%%%%%%%%%%%%%%%%%%%%%%
\subsection{Empirical Data and Network Construction}
\label{apx:empirical_data}
%%%%%%%%%%%%%%%%%%%%%%%%%%%%%%%%%%%%%%%%%%%%%%%%%%%%%%%%%%%%%%%%%%%%%%%%%%%%%%%%%%%%%%%%%%%%%

\subsubsection*{Reddit interaction networks.}
Reddit data are collected using the official Reddit API (PRAW).
We consider two large subreddits, WallStreetBets and sourdough, both of which saw surges in activity during the COVID-19 period followed by sustained declines.
The dataset includes submissions and all associated comments, with timestamps and user identifiers.
We construct interaction networks by linking users who reply to one another in discussion threads.
Each comment induces a directed edge from the commenter to the author of the parent comment or submission.
We assign unit weight to each such interaction and aggregate repeated interactions over time to obtain weighted communication networks.

\subsubsection*{Stack Exchange interaction networks.}
Stack Exchange data are obtained from publicly available data dumps hosted on the Internet Archive (\url{https://archive.org/details/stackexchange}).
We focus on the \texttt{codereview} and \texttt{datascience} communities, which have experienced declining engagement in recent years, partly due to the increasing use of large language models for coding and technical assistance.
The data include all questions, answers, and comments.
We construct interaction networks by linking users who answer questions or comment on posts, creating directed edges from the responding user to the original author.
As for Reddit, each interaction receives unit weight, and repeated interactions are aggregated over time into weighted networks.

\subsubsection*{ERC20 transaction networks.}
We construct transaction networks for ERC20 tokens using on-chain transfer data obtained via the Bitquery API (\url{https://bitquery.io/}).
We focus on two projects, Dentacoin and Salt, both of which experienced substantial declines in activity following the contraction of the cryptocurrency market after the 2017--2018 boom.
The data consist of all token transfers over multi-year periods, with each transaction specifying sender, receiver, amount, and timestamp.
We interpret each transfer as a directed interaction between users and initially assign edge weights equal to the transferred token volume.
These interactions are then aggregated over time intervals to obtain weighted transaction networks.

\subsubsection*{Temporal network construction.}
For all datasets, we construct temporal networks from time-stamped interaction data by aggregating edges over fixed intervals (one week).
At each time step, we retain only users that are active within a rolling window, defined by having at least one interaction both before and after the current time.
This filters out transient participants and ensures that the observed decay reflects sustained disengagement rather than short-lived activity.
We then symmetrize the resulting networks by combining reciprocal interactions into a single undirected weighted edge and extract the largest connected component to ensure comparability across time.
While this symmetrization is consistent with the undirected formulation of our baseline model, relaxing it to allow for asymmetric interactions is a natural extension for future work.

To focus on the decay phase, we restrict attention to the period following peak activity, defined as the time at which the number of active users is maximized.
We further restrict the node set to users present at the peak, thereby obtaining a strictly decreasing sequence of induced subgraphs.
This construction aligns with the theoretical framework, which models network decay as a sequence of endogenous exits without entry of new agents.

\balance
%%%%%%%%%%%%%%%%%%%%%%%%%%%%%%%%%%%%%%%%%%%%%%%%%%%%%%%%%%%%%%%%%%%%%%%%%%%%%%%%%%%%%%%%%%%%%
\clearpage
\onecolumn
\appendix

\renewcommand{\thesection}{Appendix \Alph{section}}

\titleformat{\section}
  {\large\bfseries}
  {\thesection.}
  {0.5em}
  {}

%%%%%%%%%%%%%%%%%%%%%%%%%%%%%%%%%%%%%%%%%%%%%%%%%%%%%%%%%%%%%%%%%%%%%%%%%%%%%%%%%%%%%%%%%%%%%
\section{Equilibrium Structure of the Decay Process}
\label{apx:well_defined}
%%%%%%%%%%%%%%%%%%%%%%%%%%%%%%%%%%%%%%%%%%%%%%%%%%%%%%%%%%%%%%%%%%%%%%%%%%%%%%%%%%%%%%%%%%%%%

In this appendix, we formalize the equilibrium structure underlying the iterative removal process introduced in the main text.
We first prove a monotonicity property of equilibrium utilities under node additions.
This implies that the union of stable subnetworks is itself stable, guaranteeing the existence of a unique maximal stable survivor set at every outside-utility level.
As a consequence, the cascade outcome is independent of the order in which unstable agents are removed.
In the following appendix, we then show that this maximal stable set also admits a dynamic game-theoretic interpretation as the inclusion-maximal subgame-perfect equilibrium of an irreversible participation game.

Let $A \in \mathbb{R}^{N\times N}$ be a nonnegative adjacency matrix and let $\beta \geqslant 0$.
For any subset $S \subseteq \{1,\dots,N\}$, write $A[S]$ for the principal submatrix induced by $S$, and let
\begin{equation}
b(S)
:=
(I-\beta A[S])^{-1}\mathbf 1_S
\end{equation}
denote the corresponding Bonacich centrality vector.
A subset $S$ is stable at outside-utility level $u_{\mathrm{out}}$ if
\begin{equation}
b_i(S)\ge c
\qquad
\text{for all } i\in S,
\end{equation}
where $c=\sqrt{2u_{\mathrm{out}}}/\alpha$ follows from Eq.~\eqref{eq:equilibrium_utility}.

The following theorem establishes the key monotonicity property.

\begin{theorem}[Union stability]
\label{thm:union_stability}
Let $S_1,S_2 \subseteq \{1,\dots,N\}$ satisfy
\begin{gather}
(I - \beta\, A[S_1])^{-1} \mathbf 1_{S_1}
\;\geqslant\;
c\,\mathbf 1_{S_1},
\label{assumption:S1}
\\
(I - \beta\, A[S_2])^{-1} \mathbf 1_{S_2}
\;\geqslant\;
c\,\mathbf 1_{S_2},
\label{assumption:S2}
\end{gather}
for some $c\ge0$.
Then their union also satisfies
\begin{equation}
(I - \beta\, A[S_1 \cup S_2])^{-1} \mathbf 1_{S_1 \cup S_2}
\;\geqslant\;
c\,\mathbf 1_{S_1 \cup S_2}.
\label{conclusion:union}
\end{equation}
\end{theorem}

\begin{proof}
Using the Neumann expansion,
\[
(I - \beta A[S])^{-1}
=
\sum_{k=0}^{\infty}\beta^k A[S]^k,
\]
the $i$-th entry of
\[
(I - \beta A[S])^{-1}\mathbf 1_S
\]
can be interpreted as the weighted count of all walks starting at node $i$ and remaining inside the induced subgraph on $S$, with walks of length $k$ weighted by $\beta^k$.

Now consider $S=S_1\cup S_2$.
For any node $i\in S_1$, the set of walks contained in $S_1\cup S_2$ contains all walks contained in $S_1$.
Therefore,
\[
\big[(I-\beta A[S_1\cup S_2])^{-1}\mathbf 1_{S_1\cup S_2}\big]_i
\ge
\big[(I-\beta A[S_1])^{-1}\mathbf 1_{S_1}\big]_i.
\]
Using Eq.~\eqref{assumption:S1}, it follows that
\[
\big[(I-\beta A[S_1\cup S_2])^{-1}\mathbf 1_{S_1\cup S_2}\big]_i
\ge c
\qquad
\text{for all } i\in S_1.
\]
The same argument applied to $S_2$ using Eq.~\eqref{assumption:S2} yields the result for all $i\in S_2$.
Together, these imply Eq.~\eqref{conclusion:union}.
\end{proof}

Theorem~\ref{thm:union_stability} immediately implies the existence of a unique maximal stable subnetwork at every outside-utility level.

\begin{corollary}[Existence of a maximal stable subset]
Let $\mathcal S(u_{\mathrm{out}})$ denote the collection of stable subnetworks at outside utility $u_{\mathrm{out}}$.
Then
\begin{equation}
S^*(u_{\mathrm{out}})
=
\bigcup_{S\in\mathcal S(u_{\mathrm{out}})} S
\end{equation}
is itself stable and is therefore the unique inclusion-maximal stable subset.
\end{corollary}

This result guarantees that the iterative removal process described in the main text is well-defined.
Starting from the full network and repeatedly removing unstable agents converges to the same maximal stable subset independently of the order in which unstable agents are removed during a cascade.

%%%%%%%%%%%%%%%%%%%%%%%%%%%%%%%%%%%%%%%%%%%%%%%%%%%%%%%%%%%%%%%%%%%%%%%%%%%%%%%%%%%%%%%%%%%%%
\section{Dynamic Equilibrium Interpretation}
\label{apx:dynamic_game}
%%%%%%%%%%%%%%%%%%%%%%%%%%%%%%%%%%%%%%%%%%%%%%%%%%%%%%%%%%%%%%%%%%%%%%%%%%%%%%%%%%%%%%%%%%%%%

In this appendix, we formalize the dynamic game-theoretic interpretation of the iterative removal process introduced in the main text.
We consider a population of agents embedded in a fixed network who repeatedly decide whether to remain active or irreversibly exit while outside opportunities improve over time.
We show that the maximal stable subsets characterized in \ref{apx:well_defined} correspond to the inclusion-maximal subgame-perfect equilibrium of the resulting dynamic participation game.

\subsection{Dynamic Exit Game}
%%%%%%%%%%%%%%%%%%%%%%%%%%%%%%%%%%%%%%%%%%%%%%%%%%%%%%%%%%%%%%%%%%%%%%%%%%%%%%%%%%%%%%%%%%%%%

Let $V=\{1,\dots,N\}$ denote a finite set of agents connected through a weighted network with adjacency matrix $A\in\mathbb R_+^{N\times N}$.
For any subset $S\subseteq V$, let $A[S]$ denote the induced submatrix corresponding to the surviving agents.

Time evolves discretely over periods
\[
t=0,1,\dots,T.
\]
At each period, every currently active agent simultaneously chooses either to remain in the network or to exit permanently.
Let
\[
S^{(t)}\subseteq V
\]
denote the set of active agents at time $t$.

Agents face an outside utility
\[
u_{\mathrm{out}}^{(0)}
<
u_{\mathrm{out}}^{(1)}
<
\cdots
<
u_{\mathrm{out}}^{(T)},
\]
which increases exogenously over time.
If agent $i$ exits at time $t$, she receives the outside utility in all subsequent periods.

For a given surviving set $S^{(t)}$, agents play the linear--quadratic network game introduced in the main text.
The stage utility of agent $i\in S^{(t)}$ is
\begin{equation}
u_i(x)
=
\alpha x_i
-
\frac12 x_i^2
+
\beta x_i \sum_{j\in S^{(t)}} A_{ij}x_j.
\end{equation}
Provided
\[
\beta\rho(A[S^{(t)}])<1,
\]
the game admits the unique Nash equilibrium
\begin{equation}
x^*(S^{(t)})
=
\alpha
(I-\beta A[S^{(t)}])^{-1}\mathbf 1,
\end{equation}
with equilibrium utility
\begin{equation}
U_i^*(S^{(t)})
=
\frac12\bigl(x_i^*(S^{(t)})\bigr)^2.
\end{equation}

An agent who remains active at time $t$ therefore receives utility
\[
U_i^*(S^{(t)}),
\]
while an exited agent receives
\[
u_{\mathrm{out}}^{(t)}.
\]

\subsection{Stable Survivor Sets}
%%%%%%%%%%%%%%%%%%%%%%%%%%%%%%%%%%%%%%%%%%%%%%%%%%%%%%%%%%%%%%%%%%%%%%%%%%%%%%%%%%%%%%%%%%%%%

Given an outside utility level $u_{\mathrm{out}}$, a subset
\[
S\subseteq V
\]
is stable if every surviving agent weakly prefers remaining in the network to exiting:
\begin{equation}
U_i^*(S)
\ge
u_{\mathrm{out}}
\qquad
\text{for all } i\in S.
\label{eq:stable_condition_appendix}
\end{equation}

As discussed in Appendix~\ref{apx:well_defined}, stable subsets need not be unique.
A dense core of strongly connected agents may remain self-sustaining even after peripheral agents become unstable.
However, equilibrium utilities are monotone under node additions:
adding active agents weakly increases the utility of every survivor.
Theorem~\ref{thm:union_stability} therefore implies that the union of stable subsets remains stable.

Consequently, for every outside-utility level there exists a unique maximal stable subset
\[
S^*(u_{\mathrm{out}})
=
\bigcup_{S\in\mathcal S(u_{\mathrm{out}})} S,
\]
where $\mathcal S(u_{\mathrm{out}})$ denotes the collection of stable subnetworks.

The iterative removal process introduced in the main text computes precisely this maximal stable subset.
Starting from the full network, unstable agents are removed, utilities are recomputed on the reduced network, and the procedure is repeated until all remaining agents satisfy Eq.~\eqref{eq:stable_condition_appendix}.
By \ref{apx:well_defined}, the final survivor set is independent of the order in which unstable agents are removed.

\subsection{Dynamic Equilibrium Interpretation}
%%%%%%%%%%%%%%%%%%%%%%%%%%%%%%%%%%%%%%%%%%%%%%%%%%%%%%%%%%%%%%%%%%%%%%%%%%%%%%%%%%%%%%%%%%%%%

We now connect the iterative removal process to forward-looking strategic behavior.

Consider a dynamic participation game in which agents repeatedly decide whether to remain active or permanently exit while outside opportunities gradually improve over time.
A strategy specifies, at every time step and after every possible history of departures, whether an agent remains active or exits.

The key observation is that remaining active preserves future flexibility.
An agent whose equilibrium utility still exceeds the outside option loses nothing by remaining in the network for one more period, because exit remains available later if conditions deteriorate further.
By contrast, exit is irreversible.

This observation implies that the maximal stable subset
\[
S^*(u_{\mathrm{out}}^{(t)})
\]
constitutes a subgame-perfect equilibrium at every time step.
Agents inside the maximal stable subset prefer remaining active, while agents outside it cannot be sustained in equilibrium because their utility already falls below the prevailing outside option.

More generally, other subgame-perfect equilibria may exist.
For example, a coalition of agents may coordinate on early exit, thereby inducing additional departures.
However, no equilibrium can sustain a surviving population larger than
\[
S^*(u_{\mathrm{out}}^{(t)}).
\]
The maximal stable subset therefore defines the inclusion-maximal equilibrium trajectory of the dynamic participation game.

As a consequence, the decay dynamics analyzed in the main text can be characterized entirely through the deterministic sequence
\[
S^*(u_{\mathrm{out}}^{(0)}),
\;
S^*(u_{\mathrm{out}}^{(1)}),
\;
S^*(u_{\mathrm{out}}^{(2)}),
\;\dots
\]
without explicitly solving the full dynamic game tree.

%%%%%%%%%%%%%%%%%%%%%%%%%%%%%%%%%%%%%%%%%%%%%%%%%%%%%%%%%%%%%%%%%%%%%%%%%%%%%%%%%%%%%%%%%%%%%
\section{Condition for Global Collapse}
\label{apx:global_collapse}
%%%%%%%%%%%%%%%%%%%%%%%%%%%%%%%%%%%%%%%%%%%%%%%%%%%%%%%%%%%%%%%%%%%%%%%%%%%%%%%%%%%%%%%%%%%%%

Below, we present Theorem~\ref{thm:sufficient_condition}, which provides a sufficient condition for a global collapse in the high-$\beta$ regime.

\begin{definition}
Let $G=(V,E)$ be a graph with adjacency matrix $A$. Denote by $b(G)\in\mathbb{R}^V$ its vector of Bonacich centralities and by $\underline{b} := \min_{i\in V} b_i(G)$ the lowest centrality in the network. Define the set of \emph{rich} nodes by 
\[
R := \{i\in V : b_i(G) > \underline{b}\}
\]
and the set of poor nodes as $P := V\setminus R$. For each rich node $i\in R$, denote by  
\[
d_{\mathrm{out}}^{R}(i)
:= \bigl\lvert\{j\in P : (i,j)\in E\}\bigr\rvert
\]
the number of \emph{poor} neighbors and let $d_{\mathrm{out}}^R\in\mathbb{R}^{R}$ be the corresponding vector.
Finally, define
\[
q \coloneqq \beta~ \underline{b} ~(I - \beta A_{RR})^{-1} d_{\mathrm{out}}^{R} \in \mathbb{R}^R
\]
and denote by $q_i$ its $i$-th entry. 
\end{definition}
By Theorem~\ref{thm:union_stability}, Bonacich centrality is monotone under node and edge deletions.
This allows us to derive a sufficient condition under which increasing the outside option to the minimum equilibrium utility triggers a complete cascade, causing all nodes to leave the network.

\begin{theorem} \label{thm:sufficient_condition}
If
\[
q_i \;\ge\; b_i(G) - \underline{b} \quad \text{for all } i\in R,
\]
then for every nonempty $S\subseteq V$,
\[
\min_{i\in S} b_i(G[S]) \le \underline{b},
\]
where $b(G[S])$ denotes Bonacich centrality on the induced subgraph $G[S]$
with the same parameters $(\alpha,\beta)$.
\end{theorem}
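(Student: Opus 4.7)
The plan is to restate the hypothesis as a pure statement about the rich-induced subgraph, and then invoke monotonicity of Bonacich centrality under induced-subgraph deletion (the Neumann-series/walks argument already used in the proof of Theorem~\ref{prop:monotonicity}).

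\emph{Step 1: reinterpret $q$.} I would block-decompose the defining identity $(I-\beta A)\,b(G)=\mathbf{1}_V$ along $V=R\sqcup P$. Because every poor node has centrality equal to $\underline{b}$ by definition of $R$ and $P$, we have $b_P=\underline{b}\,\mathbf{1}_P$, and the rich block becomes
\[
(I-\beta A_{RR})\,b_R \;=\; \mathbf{1}_R + \beta\,\underline{b}\,A_{RP}\mathbf{1}_P \;=\; \mathbf{1}_R + \beta\,\underline{b}\,d_{\mathrm{out}}^{R}.
\]
Inverting yields $b_R = b(G[R]) + q$, where $b(G[R])=(I-\beta A_{RR})^{-1}\mathbf{1}_R$ is the Bonacich vector on the rich-induced subgraph. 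Consequently, the hypothesis $q_i\ge b_i(G)-\underline{b}$ is \emph{equivalent} to the assertion that $b_i(G[R])\le\underline{b}$ for every $i\in R$.

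\emph{Step 2: monotonicity and case split.} Repeating the Neumann-series argument from Theorem~\ref{prop:monotonicity}, the entry $[b(G[T])]_i$ is a $\beta$-weighted count of walks from $i$ that remain inside $T$, so $b_i(G[T'])\le b_i(G[T])$ whenever $i\in T'\subseteq T$. Now fix any nonempty $S\subseteq V$. If $S\cap P\ne\emptyset$, pick $j$ in this intersection and conclude $b_j(G[S])\le b_j(G)=\underline{b}$. If instead $S\subseteq R$, then for any $i\in S$, combining monotonicity with the reformulation from Step~1 gives $b_i(G[S])\le b_i(G[R])\le\underline{b}$. Either case yields $\min_{i\in S}b_i(G[S])\le\underline{b}$, as required.

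The main obstacle is the block-inversion in Step~1: the crucial observation is that because $b_P$ is a constant multiple of $\mathbf{1}_P$, the entire influence of the poor block on the rich equilibrium collapses into the single vector $\beta\,\underline{b}\,d_{\mathrm{out}}^R$, which matches the definition of $q$ exactly. Once that identity is secured, the theorem reduces to a one-line case analysis on top of the subgraph monotonicity already contained in the proof of Theorem~\ref{prop:monotonicity}.
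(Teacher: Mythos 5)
Your proposal is correct and follows essentially the same route as the paper: the block decomposition with $b_P(G)=\underline{b}\,\mathbf{1}_P$ yielding $b_R(G)-b(G[R])=q$ is exactly the paper's ``drop vector'' computation $\Delta^R=q$, and the monotonicity-plus-case-split is identical to its Steps 1 and 5. Your packaging of the hypothesis as the equivalent statement $b_i(G[R])\le\underline{b}$ for all $i\in R$ is a marginally cleaner way to finish, but it is the same argument.
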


\begin{proof}

The proof consists of four steps: 

\subsubsection*{Step 1: Poor nodes always drop out}
Take any $i\in P$, so $b_i(G)=\underline{b}$. For any induced subgraph $G[S]$ with
$i\in S$, by monotonicity we have
\[
b_i(G[S]) \le b_i(G) = \underline{b}.
\]
Therefore, if $S\cap P\neq \emptyset$, then
\[
\min_{j\in S} b_j(G[S]) \le \underline{b}.
\]
Hence it remains only to consider sets $S$ with $S\subseteq R$.

\subsubsection*{Step 2: Drop vectors}
For any nonempty $T\subseteq V$, define the drop vector
\[
\Delta^T := b_T(G) - b(G[T]) \in \mathbb{R}^T_{\ge 0}.
\]
Monotonicity implies that each $\Delta^T$ is entrywise nonnegative.
To compute $\Delta^R$ explicitly, 
partition the adjacency matrix according to $V = R\cup P$:
\[
A = \begin{pmatrix}
A_{RR} & A_{RP} \\
A_{PR} & A_{PP}
\end{pmatrix},
\qquad
b(G) = \begin{pmatrix} b_R(G) \\ b_P(G) \end{pmatrix}.
\]

The global Bonacich equation $(I - \beta A)b(G) = \alpha \mathbf{1}$ becomes
\[
\begin{pmatrix}
I - \beta A_{RR} & -\beta A_{RP} \\
-\beta A_{PR} & I - \beta A_{PP}
\end{pmatrix}
\begin{pmatrix} b_R(G) \\ b_P(G) \end{pmatrix}
=
\begin{pmatrix} \alpha \mathbf{1}_R \\ \alpha \mathbf{1}_P \end{pmatrix}.
\]
Looking at the $R$-block,
\[
(I - \beta A_{RR}) b_R(G) - \beta A_{RP} b_P(G) = \alpha \mathbf{1}_R,
\]
so
\begin{equation}
(I - \beta A_{RR}) b_R(G)
= \alpha \mathbf{1}_R + \beta A_{RP} b_P(G).
\label{block-1}
\end{equation}

For the induced subgraph $G[R]$ with adjacency matrix $A_{RR}$, the Bonacich
vector $b(G[R])\in\mathbb{R}^R$ satisfies
\[
b(G[R]) = \alpha \mathbf{1}_R + \beta A_{RR} b(G[R]),
\]
hence
\begin{equation}
(I - \beta A_{RR}) b(G[R]) = \alpha \mathbf{1}_R. \label{block-2}
\end{equation}

Subtracting \eqref{block-2} from \eqref{block-1} gives
\[
(I - \beta A_{RR}) \bigl(b_R(G) - b(G[R])\bigr)
= \beta A_{RP} b_P(G).
\]
By definition $\Delta^R = b_R(G) - b(G[R])$, so
\[
(I - \beta A_{RR}) \Delta^R = \beta A_{RP} b_P(G).
\]
Since $I - \beta A_{RR}$ is invertible,
\begin{equation}
\Delta^R
= \beta (I - \beta A_{RR})^{-1} A_{RP} b_P(G).
\label{Bonacich-equation}
\end{equation}

\subsubsection*{Step 3: Lower bound on $\Delta^R$}
By definition of $P$, every entry of $b_P(G)$ equals $\underline{b}$, so
\[
b_P(G) = \underline{b} \mathbf{1}_P.
\]
Thus
\[
A_{RP} b_P(G) = \underline{b} A_{RP} \mathbf{1}_P.
\]
Using
\eqref{Bonacich-equation}, we obtain
\[
\Delta^R
= \beta (I - \beta A_{RR})^{-1} A_{RP} b_P(G)
= \beta \underline{b} (I - \beta A_{RR})^{-1} A_{RP} \mathbf{1}_P.
\]

The vector $A_{RP}\mathbf{1}_P$ has entries
\[
(A_{RP}\mathbf{1}_P)_i = \sum_{j\in P} A_{ij}
= \bigl|\{j\in P : (i,j)\in E\}\bigr|
= d_{\mathrm{out}}^{R}(i).
\]
Hence
\begin{equation}
\Delta^R
= \beta \underline{b} (I - \beta A_{RR})^{-1} d_{\mathrm{out}}^{R}
= q.
\label{out-degree}
\end{equation}

\subsubsection*{Step 4: Drops on subsets $S\subseteq R$}
To finish the proof, we will show that there cannot exist a subset $S \subseteq R$ for which every node has Bonacich centrality strictly larger than $\underline b$.
Hence all agents will leave the network.
If $R = \emptyset$, then a cascade is automatic. Therefore, assume $S\subseteq R$ is nonempty, and fix any $i\in S$. 
The subgraph $G[S]$ is an induced subgraph of $G[R]$, so by monotonicity inside $R$,
\[
b_i(G[S]) \le b_i(G[R]).
\]
Thus
\[
\Delta^S_i
= b_i(G) - b_i(G[S])
\ge b_i(G) - b_i(G[R])
= \Delta^R_i.
\]
Combining with \eqref{out-degree},
\[
\Delta^S_i \ge \Delta^R_i = q_i.
\]
By assumption, $q_i \ge b_i(G) - \underline{b}$ for all $i\in R$. 
Hence for fixed $i\in S\subseteq R$,
\[
\Delta^S_i \ge b_i(G) - \underline{b},
\]
so
\[
b_i(G[S])
= b_i(G) - \Delta^S_i
\le b_i(G) - (b_i(G) - \underline{b})
= \underline{b}.
\]

Therefore, for every nonempty $S\subseteq R$ there exists (in fact, the sufficient condition is strong enough for this to hold for all) $i\in S$ with
$b_i(G[S])\le \underline{b}$. Combined with the case $S\cap P\neq\emptyset$ from Step 1,
this shows that for every nonempty $S\subseteq V$,
\[
\min_{i\in S} b_i(G[S]) \le \underline{b}.
\]
\end{proof}

%%%%%%%%%%%%%%%%%%%%%%%%%%%%%%%%%%%%%%%%%%%%%%%%%%%%%%%%%%%%%%%%%%%%%%%%%%%%%%%%%%%%%%%%%%%%%
\section{Global Externalities Yield Spectral Amplification}
\label{apx:global_externality}
%%%%%%%%%%%%%%%%%%%%%%%%%%%%%%%%%%%%%%%%%%%%%%%%%%%%%%%%%%%%%%%%%%%%%%%%%%%%%%%%%%%%%%%%%%%%%

Recall the extended linear-quadratic utility given by 
\begin{equation}
	\label{apx:linquad_utility_extended}
	U_i
	=
	\alpha x_i
	-
	\frac{1}{2} x_i^2
	+
	\beta x_i \sum_{j=1}^N A_{ij} x_j
	+
	\gamma x_i \left( \frac{1}{N} \sum_{j=1}^N x_j \right)
\end{equation}
where $\gamma>0$ captures the strength of the global externality.
Equation \eqref{apx:linquad_utility_extended} augments local network interactions with a global complementarity term. 
Formally, this modification is equivalent to replacing the adjacency matrix $A$ in the
baseline model by an augmented interaction matrix
\[
A^{(c)} \;=\; A + c\,\mathbf{1}\mathbf{1}^\top,
\]
where $c = \gamma/N \ge 0$ and $\mathbf{1}$ denotes the all-ones vector.
The rank-one term $c\,\mathbf{1}\mathbf{1}^\top$ captures a uniform externality through which
each agent’s effort is coupled to aggregate activity in the system.

\begin{proposition}
The spectral radius $\rho(A^{(c)})$ is non-decreasing in $c$, and strictly increasing for
$c>0$ whenever $A$ is irreducible.
\end{proposition}

\begin{proof}
Since $A^{(c)} \ge A$ entrywise for all $c \ge 0$, monotonicity of the spectral radius for
nonnegative matrices implies $\rho(A^{(c)}) \ge \rho(A)$.
If $A$ is irreducible, Perron--Frobenius theory guarantees that the leading eigenvector has
strictly positive entries, implying that the rank-one perturbation
$c\,\mathbf{1}\mathbf{1}^\top$ increases the dominant eigenvalue strictly for any $c>0$.
\end{proof}

\balance
\end{document}